\begin{document}

\title{On Unique Games with Negative Weights}
\titlerunning{On Unique Games}  
\renewcommand\thefootnote{}

\author{Peng Cui\inst{1}}

\authorrunning{Peng Cui}   
%
\tocauthor{Peng Cui}
\institute{Key Laboratory of Data Engineering and Knowledge Engineering, MOE,
School of Information Resource Management, Renmin University of China, Beijing
100872, P. R. China.\\
\email{cuipeng@ruc.edu.cn}}

\maketitle              

\begin{abstract}
In this paper, the author defines Generalized Unique Game Problem (GUGP), where
weights of the edges are allowed to be negative. Two special types of GUGP are
illuminated, GUGP-NWA, where the weights of all edges are negative, and
GUGP-PWT($\rho$), where the total weight of all edges are positive and the
negative-positive ratio is at most $\rho$. The author investigates the
counterpart of the Unique Game Conjecture on GUGP-PWT($\rho$). The author shows
that Unique Game Conjecture on GUGP-PWT(1) holds true, and Unique Game
Conjecture on GUGP-PWT($1/2$) holds true, if the 2-to-1 Conjecture holds true.
The author poses an open problem whether Unique Game Conjecture holds true on
GUGP-PWT($\rho$) with $0<\rho<1$.
\end{abstract}
\section{Introduction}
The Unique Game Conjecture (UGC) is put forward by Khot on STOC 2002 as a
powerful tool to prove lower bound of inapproximabilty for combinatorial
optimization problems\cite{k}. It has been shown by researchers a positive
resolution of this conjecture would imply improved even best possible hardness
results for many famous problems, to name a few, Max Cut, Vertex Cover,
Multicut, Min 2CNF Deletion, making an important challenge to prove or refute
the conjecture.

Some variations of UGC have been mentioned. Rao proves a strong parallel
repetition theorem which shows Weak Unique Game Conjecture is equivalent to
UGC\cite{r}. Khot et al. show that Unique Game Conjecture on Max 2LIN(q) is
equivalent to UGC\cite{kkmo}. Khot poses the d-to-1 Conjectures in his original
paper for $d\ge2$\cite{k}. O'Donell et al. show a tight hardness for
approximating satisfiable constraint satisfaction problem on 3 Boolean
variables assuming the d-to-1 Conjecture for any fixed $d$\cite{ow1}. Dinur et
al. use the 2-to-2 Conjecture to derive the
hardness results of Approximate Coloring Problem, and prove that the
2-to-1 Conjecture implies their 2-to-2 Conjecture\cite{dmr}. Guruswami et al.
use the 2-to-1 Conjecture to derive the hardness result of Maximum
k-Colorable Subgraph Problem\cite{gs}. It is unknown whether UGC implies any of
the d-to-1 Conjectures, or vice versa.

Recently, the authors of \cite{abs} designed a subexponential time algorithm
for Unique Games Problem (UGP), which challenges the position of
UGC as a tool to prove lower bound of inapproximabilty. While their
results stop short of refuting UGC, they do suggest that UGP is significantly
easier than NP-hard problems. On the other side, the authors of \cite{ow2}
determined a new point of $(c,s)$-approximation NP-hardness of UGP, compared to
\cite{h}, and their new result, together with the result of \cite{fr},
determines the two-dimensional region of all known $(c,s)$-approximation
NP-hardness of UGP.

In this paper, the author defines Generalized Unique Game Problem (GUGP), where
weights of the edges are allowed to be negative. Two special types of GUGP are
illuminated, GUGP-NWA, where the weights of all edges are negative, and
GUGP-PWT($\rho$), where the total weight of all edges are positive and the
negative-positive ratio is at most $\rho$. GUGP-PWT($\rho$) over $1\ge\rho\ge
0$ makes a possible phase transition from a 2-Prover 1-Round Game Problem with $(1-\zeta,\delta)$-approximation NP-hardness to UGP. The author shows
that UGC on GUGP-PWT(1) holds true, UGC on GUGP-PWT($1/2$) holds true, if the
2-to-1 Conjecture holds true, and the $(1-\zeta,\delta)$-approximation NP-hardness
of GUGP-PWT($\rho$) possesses the compactness property when $\rho\rightarrow
0$.

Section 2 demonstrates some definitions. The author shows the main results for
GUGP-NWA and GUGP-PWT($\rho$) in Section 3. Section 4 is some discussions.

\section{Preliminaries}
In {\it 2-Prover 1-Round Game Problem (2P1R)}, we are given a bipartite graph
$G=(V,W;E)$, with each edge $e$ having a weight $w_{e}\in \mathbb{Q}^{+}$. We
are also given two sets of labels, $k_{1}$ and $k_{2}$, which we identify with
$[k_{1}]=\{1,\cdots,k_{1}\}$ and $[k_{2}]=\{1,\cdots,k_{2}\}$. Each edge
$e=(u,v)$ in the graph is equipped with a relation
$R_{e}\subseteq[k_{1}]\times[k_{2}]$. The solution of the problem is a labeling
$f_{1}:V\rightarrow [k_{1}]$ and $f_{2}:W\rightarrow [k_{2}]$ which assigns a
label to each vertex of $G$. An edge $e=(u,v)$ is said to be satisfied under
$f_{1}$ and $f_{2}$ if $(f_{1}(u),f_{2}(v))\in R_{e}$, else is said to be
unsatisfied. The object of the problem is to find a labeling maximizing the
total weight of the satisfied edges. The value of the instance, $Val(G)$, is
defined as the maximum total weight of the satisfied edges divided by the total
weight of all edges.

{\it Unique Game Problem (UGP)} can be viewed as a special type of 2P1R. In
UGP, we are given a graph $G=(V,E)$, a weight function $w_{e}\in
\mathbb{Q}^{+}$ for $e\in E$, and a set of labels, $[k]$. Each edge $e=(u,v)$
in the graph is equipped with a permutation $\pi_{e}:[k]\rightarrow[k]$. The
solution of the problem is a labeling $f:V\rightarrow [k]$ which assigns a
label to each vertex of $G$. An edge $e=(u,v)$ is said to be satisfied under
$f$ if $\pi_{e}(f(u))=f(v)$, else is said to be unsatisfied. Note that we allow
$G$ is a graph with parallel edges, i.e., there exist more than one edge
between two vertices.

It is possible to define two optimization problems in this situation. In {\it
Max UGP}, the value of the instance is defined as the maximum total weight of
the satisfied edges divided by the total weight of all edges. In {\it Min UGP},
the value of the instance is defined as the minimum total weight of the
unsatisfied edges divided by the total weight of all edges.

Khot initiates much of the interest in the following conjecture by showing that
many hardness results stem from it. It basically states that it is NP-hard to
distinguish whether many or only few edges are satisfied.\\

\noindent{\bf Conjecture 1. (\cite{k}\ Unique Game Conjecture in Max UGP Form)}
{\it For every $\zeta,\delta>0$, there is a $k=k(\zeta,\delta)$ such that given
an instance $G$ of Max UGP with $k$ labels it is NP-hard to
distinguish whether $Val(G)>1-\zeta$ or $Val(G)<\delta$.}\\

The conjecture can be restated in Min UGP form, and the two conjectures are
equivalent.\\

\noindent{\bf Conjecture 2. (Unique Game Conjecture in Min UGP Form)} {\it For
every $\zeta,\delta>0$, there is a $k=k(\zeta,\delta)$ such that given an
instance $G$ of Min UGP with $k$ labels it is NP-hard to distinguish whether
$Val(G)<\zeta$ or $Val(G)>1-\delta$.}\\

The {\it $(c,s)$-approximation NP-hardness} of Max UGP is defined as: for some
fixed $0<s<c<1$, there is a $k$ such that given an instance $G$ of Max UGP with $k$
labels it is NP-hard to distinguish whether $Val(G)\ge c$ or
$Val(G)<s+\varepsilon$ for any $\varepsilon>0$.

2-to-1 Game and 2-to-2 Game are two special types of 2P1R. In {\it 2-to-1
Game}, we are given a bipartite graph $G=(V,W;E)$, with each edge $e$ having a
weight $w_{e}\in \mathbb{Q}^{+}$. We are also given two sets of labels, $[2k]$
for $V$ and $[k]$ for $W$. Each edge $e=(u,v)$ in the graph is equipped with a
2-to-1 projection. A projection $\sigma:[2k]\rightarrow [k]$ is said to be a
2-to-1 projection if for each element $j\in [k]$ we have $|\sigma^{-1}(j)|=2$.
The value of the instance of 2-to-1 Game, $Val(G)$, is defined as the maximum
total weight of the satisfied edges divided by the total weight of all edges.

In {\it 2-to-2 Game}, we are given a graph $G=(V,E)$, a weight function
$w_{e}\in \mathbb{Q}^{+}$ for $e\in E$, and a set of labels, $[k]$. Each edge
$e=(u,v)$ in the graph is equipped with a 2-to-2 relation. A relation
$R\subseteq[2k]\times[2k]$ is said to be a 2-to-2 relation if there are two
permutations $\pi_{u},\pi_{v}:[2k]\rightarrow [2k]$ such that $(i,j)\in R$ iff
$(\pi_{u}(i),\pi_{v}(j))\in T$ where
$$T:=\bigcup_{l=1}^{k}{\{(2l-1,2l-1),(2l-1,2l),(2l,2l-1),(2l,2l)\}}.$$
The value of the instance of 2-to-2 Game, $Val(G)$, is defined as the maximum
total weight of the satisfied edges divided by the total weight of all edges.

The author lists the 2-to-1 Conjecture and the 2-to-2 Conjecture in
their maximization forms. Note that the latter is somewhat different from that in
\cite{dmr}. It can be proved that the 2-to-1 Conjecture implies the 2-to-2
Conjecture along the line of \cite{dmr}.\\

\noindent{\bf Conjecture 3. (2-to-1 Conjecture)} {\it For every $\delta>0$,
there is a $k=k(\delta)$ such that given an instance $G$ of 2-to-1 Game with
the label sets $[2k]$ and $[k]$ it is NP-hard to distinguish whether $Val(G)=1$
or $Val(G)<\delta$.}\\

\noindent{\bf Conjecture 4. (2-to-2 Conjecture)} {\it For every $\delta>0$,
there is a $k=k(\delta)$ such that given an instance $G$ of 2-to-2 Game with the
label set $[2k]$ it is NP-hard to distinguish whether $Val(G)=1$ or
$Val(G)<\delta$.}\\

In this paper, the author defines {\it Generalized Unique Game Problem (GUGP)},
where the weights of edges are allowed to be negative. In GUGP, we are given a
graph $G=(V,E)$, possibly having parallel edges, a weight function $w_{e}\in
\mathbb{Q}$ for $e\in E$, and a set of labels, $[k]$. Each edge $e=(u,v)$ in
the graph is equipped with a permutation $\pi_{e}:[k]\rightarrow[k]$. The
solution of GUGP is a labeling $f:V\rightarrow [k]$ which assigns a label to
each vertex of $G$. The goal of the problem is to maximize the total weight of
the satisfied edges. Note that $w_{e}$ could be positive or negative. The author assumes
there is no edge with zero weight for sake of clearance.

Let $W^{+}_{G}$ be the total of the positive weights of all edges, $W^{-}_{G}$
be the total of the negative weights of all edges, and
$\Sigma_{G}=W^{+}_{G}+W^{-}_{G}$ be the total weight of all edges. The author calls
$r_{G}=|W^{-}_{G}|/W^{+}_{G}$ the negative-positive ratio of the instance.

GUGP-NWA and GUGP-PWT are two special types of GUGP. In {\it GUGP-NWA}, the
weight of all edges are negative. In {\it GUGP-PWT}, the total weight of all
edges is positive. It is possible to define two optimization problems for
GUGP-NWA and for GUGP-PWT.

In {\it Max GUGP-NWA}, we seek to minimize the total weight of the unsatisfied
edges, i.e. to maximize the absolute value of the total weight of the
unsatisfied edges. The value of Max GUGP-NWA is defined as the maximum absolute
value of the total weight of the unsatisfied edges divided by $|W^{-}_{G}|$. In
{\it Min GUGP-NWA}, we seek to maximize the total weight of the satisfied
edges, i.e. to minimize the absolute value of the total weight of the satisfied
edges. The value of Min GUGP-NWA is defined as the minimum absolute value of
the total weight of the satisfied edges divided by $|W^{-}_{G}|$.

In {\it Max GUGP-PWT}, we seek to maximize the total weight of the satisfied
edges. The value of Max GUGP-PWT is defined as the maximum total weight of the
satisfied edges divided by $\Sigma_{G}$. In {\it Min GUGP-PWT}, we seek to
minimize the total weight of the unsatisfied edges. The value of Min GUGP-PWT
is defined as the minimum total weight of the unsatisfied edges divided by
$\Sigma_{G}$. In an instance $G$ of Min GUGP-PWT, let $W_{G}(f)$ be the total
weight of the unsatisfied edges under labeling $f$, and let the optimal
labeling be $f^{*}$. The value of the instance is
$Val(G)=W_{G}(f{^{*}})/\Sigma_{G}$.

The author reminds the reader that the value of Max GUGP-PWT could be more than 1 and
and the value of Min GUGP-PWT could be less than 0.

The author defines {\it Max/Min GUGP-PWT($\rho$)} as the subproblem of Max/Min GUGP-PWT
where the negative-positive ratio of the instances is upper bounded by $\rho$,
where $\rho$ is a constant independent from $k$. Since the negative-positive
ratio is always less than 1, we set the range of $\rho$ to be $0\le\rho\le 1$.
Note that Max/Min GUGP-PWT(0) is simply Max/Min UGP.

The author gives the two equivalent counterparts of the Unique Game Conjecture on Max
GUGP-PWT($\rho$) and Min GUGP-PWT($\rho$)) as follows:\\

\noindent{\bf Conjecture 5. (Unique Game Conjecture on Max GUGP-PWT($\rho$))}
{\it For every $\zeta,\delta>0$, there is a $k=k(\zeta,\delta)$ such that given
an instance of Max GUGP-PWT($\rho$) with $k$ labels it is NP-hard to
distinguish whether $Val(G)>1-\zeta$ or $Val(G)<\delta$.}\\

\noindent{\bf Conjecture 6. (Unique Game Conjecture on Min GUGP-PWT($\rho$))}
{\it For every $\zeta,\delta>0$, there is a $k=k(\zeta,\delta)$ such that given
an instance of Min GUGP-PWT($\rho$) with $k$ labels it is NP-hard to
distinguish whether $Val(G)<\zeta$ or $Val(G)>1-\delta$.}\\

The conjectures states it is NP-hard to distinguish the following two cases:
there is a labeling under which the absolute value of the total of the negative
weight of the unsatisfied edges is almost no less than the total of the
positive weight of the unsatisfied edges; under any labeling the absolute value
of the total of the negative weight of the satisfied edges is almost no less
than the total of the positive weight of the satisfied edges.

\section{Main Results}
\subsection{GUGP-NWA}
Max GUGP-NWA can be restated as the following 2P1R. We are given a graph
$G=(V,E)$, a weight function $w_{e}\in \mathbb{Q^{+}}$, and a set of labels,
$[k]$. Each edge $e=(u,v)$ in the graph is equipped with a relation
$\bar\pi_{e}=[k]\times[k]-\pi_{e} $, where $\pi_{e}:[k]\rightarrow[k]$ is a
permutation. The solution of the problem is a labeling $f:V\rightarrow [k]$
which assigns a label to each vertex of $G$. An edge $e=(u,v)$ is said to be
satisfied under $f$ if $(f(u),f(v))\in \bar\pi_{e}$. The value of the instance
is defined as the total weight of the satisfied edges divided by the total
weight of all edges.

Since a random labeling satisfies an expectation of $1-1/k$ fraction of the
total weight of all edges, GUGP-NWA cannot have a large gap. We can prove that
it is NP-hard to approximate Min GUGP-NWA within any $poly(k)$, by the similar
arguments in the following theorem. The author omits the full proof for clarity
of the paper.

\begin{theorem}
It is NP-hard to approximate Min GUGP-NWA with the label set $[n]$ within any
$poly(n)$.
\end{theorem}
\begin{proof}
Min GUGP-NWA can be restated as: In the situation of UGP, the goal is to find
minimum fraction of the total weight of the satisfied edges. The author constructs
an approximation ratio preservation reduction from TSP to the above problem.

Given an instance of TSP problem $G=(V,E)$, where each edge of $E$ has a weight
$w_{e}\in \mathbb{Q}^{+}$. Denote $n:=|V|$. The instance of the restated form
of Min GUGP-NWA is a graph $G'=G'(V,E')$, with each edge $e'\in E'$ having a
weight $w'(e')$, and with the labeling set $[n]$. For each edge $e=(u,v)\in E$,
there are three parallel edges $e^{=}$, $e^{+}$ and $e^{-}$ between $u$ and $v$
in $E'$. $e^{=}$ has weight $M$ and equipped with permutation
$\pi^{=}=\{(1,1),(2,2),\cdots,(n,n)\}$. Let $M=n\cdot Max(w)$, where $Max(w)$
is the maximum weight of all edges in $G$. $e^{+}$ has weight $w_(e)$ and
equipped with permutation $\pi^{+}=\{(1,2),(2,3),\cdots,(n,1)\}$. $e^{-}$ has
weight $w_(e)$ and equipped with permutation
$\pi^{-}=\{(1,n),(2,1),\cdots,(n,n-1)\}$.

Given a solution of TSP problem, a Hamiltonian cycle $C$, we can assign label
$1$ to $n$ to vertices of $C$ along $C$ in $G'$, and the total weight of
satisfied edges in $G'$ is exactly the total weight of edges on $C$ in $G$.

In the other direction, given a labeling $f$ of $G'$, if there are two vertices
assigned with the same label, the total weight of the satisfied edges is at
least $M$. Otherwise all vertices are assigned with label from $1$ to $n$
respectively, let $u_{i}$ be the vertices assigned label $i$ for $1\le i\le n$,
and $e^{+}_{i}\in E'$ be the edge between $u_{i}$ and $u_{i\mod n+1}$
equipped with permutation $\pi^{+}$. The total weight of the satisfied edges is
equal to $\sum_{1\le i\le n}{w'(e^{+}_{i})}$. Let $C$ be the Hamiltonian cycle
of $G$ which consists of vertices from $u_{1}$ to $u_{n}$, then the total
weight of $C$ in $G$ is exactly the total weight of satisfied edges under $f$
in $G'$.
\end{proof}

\subsection{GUGP-PWT($\rho$)}
By \cite{fr}, for any constant $C>0$, there is a $\varepsilon>0$, such that
$(C\varepsilon,\varepsilon)$-approximating UGP is NP-hard. Note that
$C\varepsilon\rightarrow 0$, when $\varepsilon\rightarrow 0$. We get an
instance of GUGP-PWT($1-C\varepsilon$), by supplementing the graph in the
instance of UGP with two new vertices and $k$ parallel edges between the two
vertices, each of which has the negative weight $-\frac{1-C\varepsilon}{k}$ and
is equipped with one of the permutations $\pi_{i}:j\mapsto (j+i-2)\mod k+1$ for
$1\le i\le k$.

For every $\zeta,\delta>0$, we can determine $C$ such that
$\frac{1}{C}<\delta$, and the size of the label set of the instance, $k$,
satisfies $\frac{1}{kC\varepsilon}<\zeta$. Then for this instance of Max
GUGP-PWT($1-C\varepsilon$) with $k$ labels, $G$, it is NP-hard to distinguish
whether $Val(G)>1-\zeta$ or $Val(G)<\delta$. Therefore, Conjecture 6 holds true
for $\rho=1$.

Conjecture 4 implies the $(1/2,0)$-approximation NP-hardness of Max UGP by the following reduction.
Write each edge in an instance of the 2-to-2 game into two parallel edges, each
of which has the same weight as the original edge and is equipped with one of
the two disjoint permutations extracted from the original 2-to-2 relation. Thus
Conjecture 4 implies Conjecture 6 holds true for $\rho=1/2$ by the similar
technique in the first paragraph of this subsection. Since Conjecture 3 implies
Conjecture 4, Conjecture 6 holds true for $\rho=1/2$ if Conjecture 3 holds
true.

Finally, we establish a connection from Conjecture 6 to the Unique Game
Conjecture by the following theorem.

\begin{theorem}
If Conjecture 6 holds true for any $\rho>0$, Conjecture 2 holds true.
\end{theorem}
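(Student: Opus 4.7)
The plan is to reduce GUGP-PWT$(\rho)$ to Min UGP by the simplest possible transformation: given an instance $G$ of GUGP-PWT$(\rho)$ on label set $[k]$, let $G^{+}$ be the Min UGP instance obtained by deleting every negative-weight edge and keeping the permutations on the surviving positive edges. Since $r_G \le \rho$, the deleted mass is only an $O(\rho)$ fraction of the total positive mass, so the values of $G$ and $G^{+}$ are close and any strong gap for $G$ transfers to a slightly weakened but still strong gap for $G^{+}$.

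For any labeling $f$, I would decompose the weight of unsatisfied edges as $W_G(f) = W^{+}_G(f) + W^{-}_G(f)$, where $W^{+}_G(f) \in [0,\,W^{+}_G]$ is the positive part and $W^{-}_G(f) \in [W^{-}_G,\,0]$ is the negative part; this immediately gives $W_G(f) \le W^{+}_G(f) \le W_G(f) + |W^{-}_G|$. Combining with the identity $\Sigma_G = W^{+}_G(1 - r_G)$ and dividing by $W^{+}_G$ produces the two transfer inequalities
\[
Val(G^{+}) \le (1-r_G)\,Val(G) + r_G,\qquad Val(G^{+}) \ge (1-r_G)\,Val(G),
\]
the first controlling completeness and the second controlling soundness of the reduction.

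Given target parameters $\zeta_0, \delta_0 > 0$ of Conjecture 2, I would then pick any $\rho \in (0,\,\min(\zeta_0, \delta_0))$, set $\zeta := \zeta_0 - \rho$ and $\delta := \delta_0 - \rho$, and invoke Conjecture 6 at this $\rho$ with these relaxed gap parameters. The conjecture supplies $k = k(\zeta,\delta)$ and the NP-hardness of distinguishing $Val(G) < \zeta$ from $Val(G) > 1 - \delta$ on GUGP-PWT$(\rho)$ instances with $k$ labels. The completeness bound then gives $Val(G^{+}) < (1-r_G)\zeta + r_G \le \zeta + \rho = \zeta_0$ in the first case, and the soundness bound gives $Val(G^{+}) > (1-r_G)(1-\delta) \ge 1 - r_G - \delta \ge 1 - \rho - \delta = 1 - \delta_0$ in the second, yielding exactly the Conjecture 2 gap at parameters $\zeta_0, \delta_0$ on Min UGP.

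There is no deep obstacle in this plan; the reduction is a one-step deletion and the value comparison is elementary. The only real care is the parameter accounting of the last paragraph, ensuring that the $\rho$-slack from deleting the negative edges plus the normalization shift from $\Sigma_G$ to $W^{+}_G$ together fit strictly inside the target window $(\zeta_0,\,1 - \delta_0)$. \qed
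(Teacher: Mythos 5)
Your proposal is correct and takes essentially the same approach as the paper: the identical reduction (delete the negative-weight edges, keep the permutations) with the same two transfer inequalities relating $Val(G)$ and $Val(G^{+})$ via $\Sigma_G=(1-r_G)W^{+}_G$. The only difference is presentational --- you run the reduction in the direct direction (mapping the Conjecture 6 gap into the Conjecture 2 gap with $\rho<\min(\zeta_0,\delta_0)$), whereas the paper argues contrapositively (a polynomial-time distinguisher for Min UGP yields one for GUGP-PWT($\rho$) with $\rho=\min(\zeta,\delta)/2$), and your upper bound $Val(G^{+})\le(1-r_G)Val(G)+r_G$ is a slightly sharper form of the paper's $Val(G')\le Val(G)+\rho$.
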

\begin{proof}
Suppose Conjecture 2 holds false, then for some $\zeta,\delta>0$, for any label
size $k$, we can decide in polynomial time whether an instance of Min UGP with
$k$ labels has a value more than $1-\delta$ or less than $\zeta$. The author claims
Conjecture 6 for $\rho=\min(\zeta,\delta)/2$ holds false.

Given an instance $G=(V,E)$ of Min GUGP-PWT($\rho$), we construct an instance
$G'=(V,E')$ of Min UGP as follows. Let $E'$ be the set of the edges in $E$ with
positive weights. Let $f^{*}$ be the optimal labeling of $G$, and $f'$ be the
optimal labeling of $G'$. Then $Val(G')=W_{G'}(f')/W^{+}_{G}$ and
$Val(G)=W_{G}(f^{*})/\Sigma_{G}$.

Since $W_{G'}(f')\ge W_{G}(f')\ge W_{G}(f^{*})$ and $\Sigma_{G}/W^{+}_{G}\ge
1-\rho$, $Val(G')\ge (1-\rho)Val(G)$.

By the definition of $E'$, $W_{G}(f^{*})\ge W_{G'}(f^{*})-\rho W^{+}_{G}$. We
have $Val(G)\Sigma_{G}=W_{G}(f^{*})\ge W_{G'}(f^{*})-\rho W^{+}_{G}\ge
W_{G'}(f')-\rho W^{+}_{G}=(Val(G')-\rho)W^{+}_{G}$. Therefore, $Val(G')\le
Val(G)+\rho$.

If $Val(G)<\zeta/2$, then $Val(G')<\zeta$. If $Val(G)>1-\delta/2$, then
$Val(G')>1-\delta$. Thus we can decide in polynomial time whether the instance
of Min GUGP-PWT($\rho$) has a value more than $1-\delta/2$ or less than
$\zeta/2$.
\end{proof}

\section{Discussions}
It leaves as an open problem whether Unique Game Conjecture holds true on
GUGP-PWT($\rho$) for $0<\rho<1$. The author makes a reasonable and rather bold
conjecture: If Conjecture 6 holds true for some $0<\rho<1$, it holds true for
any $0<\rho<1$, which would lead to the corollary that the 2-to-1 Conjecture
implies the Unique Game Conjecture. To confirm our
conjecture, it would be critical to seek techniques to derive the
$(1-\zeta,\delta)$-approximation NP-hardness on smaller $\rho$ by the
$(1-\zeta,\delta)$-approximation NP-hardness on larger $\rho$.

\end{document}